\title{Note on ``The Complexity of Counting Surjective Homomorphisms and Compactions''}
\author{Holger Dell}
\date{}
\affil{Saarland University and Cluster of Excellence, MMCI, Saarbr\"ucken, Germany,
  \texttt{hdell@mmci.uni-saarland.de}}
\newcommand{\R}{\mathbf{R}}
\newtheorem{theorem}{Theorem}
\newtheorem{lemma}[theorem]{Lemma}
\theoremstyle{plain}
\theoremstyle{nonumberplain}
\newtheorem{proof}{Proof}
\DeclarePairedDelimiter\abs{\lvert}{\rvert}
\newcommand\cP{\#\mathrm{P}}
\newcommand\supp{\operatorname{supp}}
\newcommand\vsurj{\operatorname{VertSurj}}
\newcommand\vesurj{\operatorname{Comp}}
\newcommand\dsub{\operatorname{S\dot ub}}
\newcommand\ind{\operatorname{Ind}}
\newcommand\aut{\operatorname{Aut}}
\newcommand\graphsu{\mathcal G}
\renewcommand\hom{\operatorname{Hom}}
\begin{document}

\maketitle

\begin{abstract}
  Focke, Goldberg, and Živný~\cite{Fockeetal} prove a complexity dichotomy for the
  problem of counting surjective homomorphisms from a large input graph~$G$ without loops to a fixed graph~$H$ that may have loops.
  In this note, we give a short proof of a weaker result:
  Namely, we only prove the $\cP$-hardness of the more general problem in which~$G$ may have loops.
  Our proof is an application of a powerful framework of Lov\'asz~\cite{lovaszbook}, and it is analogous to proofs of Curticapean, Dell, and Marx~\cite{CDM17} who studied the ``dual'' problem in which the pattern graph~$G$ is small and the host graph~$H$ is the input.
  Independently, Chen~\cite{HC} used Lov\'asz's framework to prove a complexity dichotomy for counting surjective homomorphisms to fixed finite structures.
\end{abstract}

\section{Preliminaries}
Let $\graphsu$ be the set of all unlabeled, finite graphs that may have loops and multiple edges.
Let $G,H\in\graphsu$.
We denote the vertex set of~$G$ with~$V(G)$, the set of its loops with $E_1(G)$, the set of its non-loop edges with $E_2(G)$, and the set of all edges with $E(G)=E_1(G)\cup E_2(G)$.
Let $\hom(G,H)$ be the number of homomorphisms from~$G$ to~$H$.
Let $\aut(H)$ be the number of automorphisms of~$H$.
Let $\vsurj(G,H)$ be the number of vertex-surjective homomorphisms from~$G$ to~$H$ (note that this is different from the notion in \cite{CDM17,lovaszbook}, where surjectivity has to hold also for edges).
Let $\vesurj(G,H)$ be the number of ``compactions'' from~$G$ to~$H$, that is, the number of homomorphisms that are surjective on the vertices and non-loop edges of~$H$.
For a set $S\subseteq V(H)$, we denote the subgraph of~$H$ induced by the vertices of~$S$ with~$H[S]$.
Let $\ind(F,H)$ be the number of induced subgraphs of~$H$ that are isomorphic to~$F$.
Let $\dsub(F,H)$ be the number of subgraphs~$H'$ isomorphic to~$F$ that are obtained from~$H$ by deleting vertices or non-loop edges; that is, we have $V(H')=V'\subseteq V(H)$ and $E_2(H')\subseteq E_2(H[V'])$, while $E_1(H')=E_1(H[V'])$ holds.

Analogous to the setup in \cite[Section 3]{CDM17}, we view these counting functions as infinite matrices indexed by graphs~$F\in\graphsu$, which are ordered by their total size $|V(F)|+|E(F)|$.
Then $\vsurj$ and $\vesurj$ are lower triangular matrices with diagonal entries $\vsurj(H,H)=\vesurj(H,H)=\aut(H)$, and $\ind$ and $\dsub$ are upper triangular matrices with~$1$s on their diagonals.
In particular, these matrices are invertible.

\section{Previous results}
A~graph is called \emph{reflexive} if all loops are present, and a graph is called \emph{irreflexive} if it has no loops.
Let~$\mathcal F\subset\graphsu$ be the family of all graphs that are disjoint unions of irreflexive bicliques and reflexive cliques.

\begin{theorem}[Dyer \& Greenhill \cite{DyerGreenhill}]\label{dyergreenhill}
  If $H\in\mathcal F$, then $\hom(*,H)$ can be computed in polynomial time.
  Otherwise the problem is $\cP$-hard, even when the input graphs are restricted to be irreflexive.
\end{theorem}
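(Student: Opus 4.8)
Theorem~\ref{dyergreenhill} is a known dichotomy of Dyer and Greenhill, so I can only sketch the standard proof; almost all the content is on the hardness side. For the tractable direction the plan is to use multiplicativity of $\hom$ over connected components: if $C_1,\dots,C_r$ are the components of $G$ and $H_1,\dots,H_s$ those of $H\in\mathcal F$, then $\hom(G,H)=\prod_{k}\sum_{i}\hom(C_k,H_i)$, so it suffices to evaluate $\hom(C,H_i)$ for a connected graph $C$ and a fixed component $H_i$. If $H_i$ is a reflexive clique on $\ell$ vertices then every map $V(C)\to V(H_i)$ is a homomorphism (loops map to loops, distinct images are adjacent, equal images carry a loop), so $\hom(C,H_i)=\ell^{\abs{V(C)}}$. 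If $H_i=K_{a,b}$ is an irreflexive biclique then a homomorphism exists only when $C$ is irreflexive and bipartite; with $(P,Q)$ the bipartition when $C$ has an edge, $\hom(C,H_i)=a^{\abs P}b^{\abs Q}+a^{\abs Q}b^{\abs P}$, while $\hom(C,H_i)=a+b$ when $C$ is a single vertex. Each of these is polynomial-time computable in $\abs{V(C)}$.

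For the hardness direction I would run the standard polynomial-interpolation reduction for counting homomorphisms. Using the component formula one first reduces to $H$ connected with no isolated vertex, so that $H\notin\mathcal F$ means $H$ is neither a reflexive clique nor an irreflexive biclique; the aim is then to reduce from one of finitely many base $\cP$-complete problems that are hard already on irreflexive inputs --- counting proper $3$-colourings, i.e.\ $\hom(\cdot,K_3)$; counting independent sets, i.e.\ $\hom(\cdot,H_0)$ where $H_0$ is a single non-loop edge with an extra loop at one endpoint (a homomorphism to $H_0$ being exactly the choice of an independent set $\sigma^{-1}(\text{loopless vertex})$); and a constant-size base graph for the connected-bipartite-not-complete-bipartite case. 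The mechanism is gadget substitution plus linear algebra: fix gadgets $\Gamma_1,\Gamma_2,\dots$ (graphs with two distinguished vertices, e.g.\ paths of various lengths, whose signature matrices are powers of, or small combinations built from, the adjacency matrix of $H$); from an instance $G'$ of a base problem form $G'[\Gamma_j]$ by substituting $\Gamma_j$ for every edge of $G'$ (keeping the instance irreflexive); then $\hom(G'[\Gamma_j],H)=\sum_t c_{j,t}N_t(G')$, where the $N_t(G')$ count partial homomorphisms of type $t$ and include the base count; choosing the $\Gamma_j$ so that $(c_{j,t})$ is nonsingular --- a Vandermonde-type condition on eigenvalues of the signature matrices --- recovers all $N_t(G')$, and in particular the base count, via a polynomial-size linear solve with oracle calls to $\hom(\cdot,H)$.

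The difficulty, and the step I expect to be the main obstacle, is the combinatorial case analysis showing that \emph{every} connected $H\notin\mathcal F$ admits such a gadget family routing to a base case. I would split on whether $H$ carries a loop: if it does, then since $H$ is not a reflexive clique some looped vertex has a non-neighbour, which gives enough room to implement $H_0$ and hence the hardness of counting independent sets; if $H$ is irreflexive, then, being connected and not complete bipartite, it contains an odd cycle or a non-edge across its bipartition, and one routes towards $\hom(\cdot,K_3)$ or the bipartite base graph accordingly. An inductive step merging vertices of $H$ with equal neighbourhoods (simulating the resulting vertex weights by pendant gadgets) keeps $\abs{V(H)}$ bounded, and a finite list of small exceptional graphs must be handled by hand. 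Verifying that all the coefficient matrices arising along the way are nonsingular, uniformly across these cases, is where essentially all the work sits; the component reduction, the tractable formulas, and the linear-algebra bookkeeping are routine.
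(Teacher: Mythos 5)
The paper does not prove Theorem~\ref{dyergreenhill}; it is quoted as a black box from Dyer and Greenhill~\cite{DyerGreenhill}, and the whole point of the note is to \emph{use} it (via \eqref{eq: hom vsurj inverse expanded}, \eqref{eq: hom comp sub inverse}, and Lemma~\ref{lem: dual monotonicity}) rather than reprove it. So there is no internal proof to compare your write-up against.

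Taken on its own terms, your tractable direction is correct and essentially complete: multiplicativity over the connected components of $G$ and the component structure of $H\in\mathcal F$, together with the two closed-form counts $\ell^{\abs{V(C)}}$ for a reflexive clique and $a^{\abs P}b^{\abs Q}+a^{\abs Q}b^{\abs P}$ (resp.\ $a+b$ for an isolated vertex) for an irreflexive biclique, are exactly what is needed, and you correctly note that a looped or non-bipartite $C$ contributes $0$ against a biclique. Your hardness direction is a fair high-level account of the Dyer--Greenhill interpolation strategy: reduce to connected $H$, identify the three base problems (counting independent sets via the loop-plus-pendant graph $H_0$, $3$-colourings via $K_3$, and a bipartite base case), and use edge-gadget substitution (stretching/thickening) with a Vandermonde-style nonsingularity argument to recover the base counts. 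You explicitly flag that you are leaving the case analysis --- verifying that every connected $H\notin\mathcal F$ really does route to one of the base cases with nonsingular interpolation matrices, and that the constructions preserve irreflexivity of the input --- to the cited source, which is exactly the hard part of~\cite{DyerGreenhill}. One small caution: the ``merge twin vertices and simulate vertex weights'' step you mention is closer in spirit to the later weighted dichotomies (Bulatov--Grohe and successors) than to Dyer and Greenhill's original unweighted argument, which works directly with the eigenvalue structure of $H$ and a short list of reductions; this does not make your sketch wrong, but if you intend it as a summary of \cite{DyerGreenhill} specifically, that detail is anachronistic. As a proof it is of course incomplete by design, but as a summary of the cited result it is accurate and consistent with what this paper assumes.
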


Let $\mathcal C\subset\mathcal F$ be the family of all graphs that are disjoint unions of irreflexive stars and reflexive cliques of size at most two.

\begin{theorem}[Focke, Goldberg, and Živný~\cite{Fockeetal}]\label{fockeetal vesurj}
  If $H\in\mathcal C$, then $\vesurj(*,H)$ can be computed in polynomial time.
  Otherwise the problem is $\cP$-hard, even when the input graphs are restricted to be irreflexive.
\end{theorem}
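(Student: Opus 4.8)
The plan is to work entirely inside Lov\'asz's framework, expressing $\vesurj(\cdot,H)$ in the $\hom$-basis and reading off both halves of the dichotomy from that expansion. First I would record the basic identity. A homomorphism $\phi\colon G\to H$ is exactly a compaction from $G$ onto its \emph{image} $H'$, where $V(H')=\phi(V(G))$, where $E_2(H')$ is the set of non-loop edges of $H$ hit by $\phi$, and where $E_1(H')=E_1(H[V(H')])$; such an $H'$ is precisely a graph obtained from $H$ by deleting vertices and non-loop edges in the sense used to define $\dsub$, and conversely every such $H'$ together with a compaction $G\to H'$ yields a homomorphism $G\to H$ whose image is $H'$. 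Grouping by isomorphism type gives, for every $G\in\graphsu$,
\begin{equation*}
  \hom(G,H)=\sum_{F}\dsub(F,H)\cdot\vesurj(G,F).
\end{equation*}
Since $\dsub$ is upper triangular with $1$s on the diagonal, it is invertible, so there is a fixed finite expansion
\begin{equation*}
  \vesurj(\cdot,H)=\sum_{F}a^H_F\cdot\hom(\cdot,F),
\end{equation*}
the sum ranging over the graphs $F$ obtainable from $H$ by deleting vertices and non-loop edges, with $a^H_H=1$ and, for any \emph{coatom} $F$ of $H$ in that order (a maximal proper such graph, e.g.\ $F=H-e$ for a non-loop edge $e$), $a^H_F=-\dsub(F,H)\neq 0$.

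For the tractable half, suppose $H\in\mathcal C$. Every connected component of $H$ is an irreflexive star, a reflexive $K_1$, or a reflexive $K_2$, and deleting vertices and non-loop edges from such a graph — and, componentwise, from a disjoint union — again produces only disjoint unions of irreflexive stars, isolated vertices, and reflexive $K_1$s and $K_2$s, all of which lie in $\mathcal F$. Hence every $F$ occurring in the expansion of $\vesurj(\cdot,H)$ lies in $\mathcal F$, so by Theorem~\ref{dyergreenhill} each $\hom(\cdot,F)$ is polynomial-time computable; as the expansion is a fixed finite $\Q$-linear combination, $\vesurj(\cdot,H)$ is polynomial-time computable (a fortiori on irreflexive inputs).

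For the hard half, suppose $H\notin\mathcal C$; I would show that some $F_0$ with $a^H_{F_0}\neq 0$ satisfies $F_0\notin\mathcal F$. If $H\notin\mathcal F$, take $F_0=H$. Otherwise $H\in\mathcal F\setminus\mathcal C$, so some component of $H$ is an irreflexive biclique $K_{s,t}$ with $s,t\geq 2$ or a reflexive clique on at least three vertices; in the first case put $F_0=H-e$ for a non-loop edge of that biclique (the component becomes $K_{s,t}-e$, which is irreflexive and has an induced path on four vertices, hence is not a disjoint union of bicliques), and in the second put $F_0=H-e$ for a non-loop edge of that clique (the component becomes connected with all its loops but is not complete, hence is neither a reflexive clique nor, having loops, a disjoint union of irreflexive bicliques). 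In every case $F_0$ is a coatom, so $a^H_{F_0}=-\dsub(F_0,H)\neq 0$, and $F_0\notin\mathcal F$; by Theorem~\ref{dyergreenhill}, $\hom(\cdot,F_0)$ is $\cP$-hard even on irreflexive inputs.

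It remains to give a polynomial-time Turing reduction from $\hom(\cdot,F_0)$ to $\vesurj(\cdot,H)$ that queries only irreflexive graphs — an interpolation in the spirit of \cite{CDM17}. Set $\mathcal S=\set{F:a^H_F\neq 0}$, a fixed finite set; Lov\'asz's framework \cite{lovaszbook} (as used in \cite{CDM17,HC}) provides a fixed \emph{irreflexive} graph $D$ with the values $\hom(D,F)$, $F\in\mathcal S$, pairwise distinct. Given an irreflexive input $G$, query the oracle, for $k=0,\dots,\abs{\mathcal S}-1$, on the disjoint union of $G$ with $k$ copies of $D$; using $\hom(G_1\uplus G_2,F)=\hom(G_1,F)\cdot\hom(G_2,F)$, the answers satisfy
\begin{equation*}
  \vesurj\bigl(G\uplus kD,\,H\bigr)=\sum_{F\in\mathcal S}\bigl(a^H_F\,\hom(G,F)\bigr)\cdot\hom(D,F)^{k},
\end{equation*}
a nonsingular Vandermonde-type system in the unknowns $a^H_F\,\hom(G,F)$; solving it and dividing by the known nonzero constant $a^H_{F_0}$ recovers $\hom(G,F_0)$. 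All numbers have polynomial bit-length and all queried graphs are irreflexive, so together with the previous paragraph this shows $\vesurj(\cdot,H)$ is $\cP$-hard even on irreflexive inputs. I expect the genuine obstacle to be exactly the last ingredient: producing an \emph{irreflexive} distinguishing gadget $D$, i.e.\ verifying that the functions $\hom(\cdot,F)$, $F\in\mathcal S$, remain pairwise distinct — indeed linearly independent — when restricted to irreflexive source graphs, which matters precisely because the members of $\mathcal S$ may carry loops when $H$ does, and is the point at which one relies on a refinement of Lov\'asz's framework rather than its bare statement. (This is plausibly why the present note settles for the easier variant in which $G$ may have loops, where the unrestricted form of the framework applies directly and the tractability direction need not be treated.)
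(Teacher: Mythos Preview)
Your plan coincides with the paper's for the weaker statement the paper actually proves: both write $\vesurj(\cdot,H)$ as a finite $\hom$-combination via $\hom=\vesurj\cdot\dsub$, read off tractability from the closure of $\mathcal C$ under deleting vertices and non-loop edges, and for hardness locate an $F_0\notin\mathcal F$ with nonzero coefficient --- either $F_0=H$ when $H\notin\mathcal F$, or $F_0=H-e$ when $H\in\mathcal F\setminus\mathcal C$ --- via exactly your two-term computation $\dsub^{-1}(H-e,H)=-\dsub(H-e,H)$. Your case analysis for why $H-e\notin\mathcal F$ is more explicit than the paper's one-line assertion, but the content is the same.

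Where you diverge is the extraction step. The paper does \emph{not} look for a single distinguishing gadget~$D$; it invokes Lov\'asz's Proposition~5.43 directly: for the closure $S$ of $\supp(\alpha)$ under homomorphic images, the $(S\times S)$-matrix $(\hom(F,F'))$ is invertible, so querying the oracle on $G\cup F$ for each $F\in S$ and solving one linear system recovers every $\hom(G,F')$ with $F'\in\supp(\alpha)$. This sidesteps the extra hypothesis your Vandermonde scheme needs (pairwise distinct values $\hom(D,F)$ for a \emph{single}~$D$), which you assert but do not argue. The price the paper pays is exactly the one you anticipate: the test graphs $F\in S$ may carry loops, so the reduction only yields hardness for inputs that may have loops, and the paper is explicit that it proves \emph{only} this weaker version.

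Your diagnosis that the irreflexive restriction is the genuine obstacle is correct, and it is a real obstruction rather than a missing verification: take $F_1$ to be two isolated looped vertices and $F_2$ a reflexive $K_2$; then $\hom(G,F_1)=\hom(G,F_2)=2^{\abs{V(G)}}$ for every irreflexive $G$, so no irreflexive $D$ can separate them, and both can occur among the subgraphs of a looped~$H$. Hence neither your Vandermonde route nor the paper's Lov\'asz-matrix route reaches the full Theorem~\ref{fockeetal vesurj}; that requires the different machinery of~\cite{Fockeetal}.
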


\begin{theorem}[Focke, Goldberg, and Živný \cite{Fockeetal}]\label{fockeetal vsurj}
  If $H\in\mathcal F$, then $\vsurj(*,H)$ can be computed in polynomial time.
  Otherwise the problem is $\cP$-hard, even when the input graphs are restricted to be irreflexive.
\end{theorem}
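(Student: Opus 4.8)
The plan is to settle the polynomial-time case directly and to derive $\cP$-hardness from Theorem~\ref{dyergreenhill} by feeding a simple inclusion--exclusion identity into Lov\'asz's framework for homomorphism counts.

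The starting point is that a homomorphism $G\to H$ with vertex-image $S\subseteq V(H)$ is exactly a vertex-surjective homomorphism from~$G$ onto~$H[S]$, so that $\hom(G,H)=\sum_{S\subseteq V(H)}\vsurj(G,H[S])$ for every~$G$. Inverting this by inclusion--exclusion yields
\begin{equation}\label{eq:s2h}
  \vsurj(G,H)=\sum_{S\subseteq V(H)}(-1)^{\abs{V(H)\setminus S}}\hom(G,H[S])
\end{equation}
(equivalently, $\vsurj=\hom\cdot\ind^{-1}$ in the matrix notation of the preliminaries, $\ind$ being invertible). Collecting the terms of \eqref{eq:s2h} by isomorphism type turns it into a finite linear combination
\begin{equation}\label{eq:basis}
  \vsurj(*,H)=\sum_{F}(-1)^{\abs{V(H)}-\abs{V(F)}}\ind(F,H)\cdot\hom(*,F),
\end{equation}
the sum ranging over isomorphism types of induced subgraphs~$F$ of~$H$; in particular the coefficient of $\hom(*,H)$ equals $\ind(H,H)=1$.

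If $H\in\mathcal F$, then every induced subgraph of a disjoint union of irreflexive bicliques and reflexive cliques is again such a graph, so $H[S]\in\mathcal F$ for every~$S\subseteq V(H)$; since $\abs{V(H)}$ is a constant, \eqref{eq:s2h} exhibits $\vsurj(*,H)$ as a fixed finite linear combination of functions $\hom(*,H[S])$, each polynomial-time computable by Theorem~\ref{dyergreenhill}, so $\vsurj(*,H)$ is polynomial-time computable. If $H\notin\mathcal F$, then $\hom(*,H)$ is $\cP$-hard by Theorem~\ref{dyergreenhill}, and I would conclude with the interpolation step that is standard in Lov\'asz's framework (see \cite{CDM17,lovaszbook}). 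The functions $\hom(*,F)$ for pairwise non-isomorphic~$F$ are linearly independent, and $\hom(G_1\sqcup G_2,F)=\hom(G_1,F)\hom(G_2,F)$; hence, given an oracle for the left-hand side of \eqref{eq:basis}, one recovers every $\hom(*,F)$ occurring with a nonzero coefficient by evaluating the oracle at $G\sqcup B_1,\dots,G\sqcup B_t$ for suitably chosen fixed graphs~$B_j$ ($t$ the number of terms in~\eqref{eq:basis}, a constant) and solving the resulting invertible linear system for the unknowns $\hom(G,F)$. Since $\hom(*,H)$ occurs in \eqref{eq:basis} with coefficient~$1$, this is a polynomial-time Turing reduction from computing $\hom(*,H)$ to computing $\vsurj(*,H)$, establishing that $\vsurj(*,H)$ is $\cP$-hard.

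I expect the genuine difficulty to be the claim that hardness persists when the input graphs are required to be \emph{irreflexive}. When~$H$ is irreflexive this is automatic: all~$F$ in \eqref{eq:basis} are then loopless, Lov\'asz's linear independence remains valid over loopless inputs, and the gadgets~$B_j$ may be taken loopless, so the reduction above produces irreflexive instances. When~$H$ has loops, however, some~$F$ in \eqref{eq:basis} also have loops, and the functions $\hom(*,F)$ can fail to be linearly independent once restricted to loopless graphs, so the interpolation step may collapse. Overcoming this is precisely the delicate part of the argument of Focke, Goldberg, and Živný; the route sketched here only yields $\cP$-hardness for inputs that are permitted to have loops.
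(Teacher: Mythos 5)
Your proposal matches the paper's proof essentially step for step: the same identity $\hom=\vsurj\cdot\ind$, the same inversion \eqref{eq: hom vsurj inverse expanded}, the same observation that $\mathcal F$ is closed under taking induced subgraphs for the algorithmic direction, the same reliance on Lov\'asz's interpolation via disjoint unions for the hardness direction (the paper packages this as Lemma~\ref{lem: dual monotonicity}, invoking Proposition 5.43 of \cite{lovaszbook} to guarantee the invertibility of the system you describe with ``suitably chosen'' gadgets $B_j$), and the same extraction of the coefficient $\alpha(H)=1$. You also correctly identify that this route only yields hardness when the input graphs may carry loops, which is exactly the weakening the paper announces in its title and Section~3 header.
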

\section{Proof of weaker versions of Theorems \ref{fockeetal vesurj} and \ref{fockeetal vsurj}}
In this section, we establish the algorithms of Theorems~\ref{fockeetal vesurj} and~\ref{fockeetal vsurj}, and prove the weaker version of the hardness claims by a reduction from Theorem~\ref{dyergreenhill}; that is, our reduction produces input graphs that may have loops.
Every homomorphism from~$G$ to~$H$ is vertex-surjective on its image under~$G$, so the following identities hold:
\begin{align}
  \hom(G,H) &= \sum_{S\subseteq V(H)} \vsurj(G,H[S])\,.
  \label{eq: hom vsurj expanded}
  \\
  \vsurj(G,H) &= \sum_{S\subseteq V(H)} (-1)^{\abs{V(H)\setminus S}}\cdot\hom(G,H[S])\,.
  \label{eq: hom vsurj inverse expanded}
\end{align}
The second equation is the inversion of the first one, and one way to obtain it is by an application of the principle of inclusion and exclusion.
(Another way is to observe that \eqref{eq: hom vsurj expanded} is equivalent to the matrix identity $\hom = \vsurj\cdot\ind$ analogously to how this was done in \cite[Section 3]{CDM17}; inverting $\ind$ yields the matrix identity $\vsurj = \hom\cdot\ind^{-1}$ that is equivalent to \eqref{eq: hom vsurj inverse expanded}).
For compactions we get similar identities from the fact the every homomorphism from~$G$ to~$H$ is a compaction to a subgraph of~$H$ obtained by deleting vertices and non-loop edges.
We obtain $\hom=\vesurj\cdot\dsub$, and we expand this equation and its inversion for convenience. For all $G,H\in\graphsu$, we have:
\begin{align}
  \hom(G,H) &
  = \sum_{F\in\graphsu} \vesurj(G,F) \cdot \dsub(F,H)\,,
  \label{eq: hom comp sub}
  \\
  \vesurj(G,H) &= \sum_{F\in\graphsu} \hom(G,F) \cdot \dsub^{-1}(F,H)\,.
  \label{eq: hom comp sub inverse}
\end{align}
Note that the sum in~\eqref{eq: hom comp sub} is indeed finite since $\dsub(F,H)\ne 0$ holds only for finitely many graphs~$F$, namely certain subgraphs of~$H$.
Since $\dsub$ is an infinite upper triangular matrix with $1$s on its diagonal, it has an inverse matrix $\dsub^{-1}$, which is also upper triangular with $1$s on its diagonal, and so $\dsub^{-1}(F,H)\ne 0$ holds only if $\dsub(F,H)\ne 0$, and the sum in~\eqref{eq: hom comp sub inverse} is also finite.

\subsection{Algorithms}
The algorithms for Theorems~\ref{fockeetal vesurj} and~\ref{fockeetal vsurj} immediately follow from~\eqref{eq: hom vsurj inverse expanded} and \eqref{eq: hom comp sub inverse} since we want to compute the left sides of the equations and can, respectively, compute the right sides in polynomial time using Theorem~\ref{dyergreenhill}.
For the case of $\vsurj$, note that deleting any vertices of $H\in\mathcal F$ again yields a graph in~$\mathcal F$.
For the case of $\vesurj$, note that deleting any vertices and non-loop edges of $H\in\mathcal C$ yields a graph~$H'\in\mathcal C$, and that $\mathcal C\subseteq\mathcal F$ holds. (Indeed, $\mathcal C$ is the unique maximal subset of~$\mathcal F$ that is closed under taking subgraphs in the sense of $\dsub$.)

\subsection{Hardness}
We use two ingredients.
The first is the following fact for the disjoint union \cite[(5.28)]{lovaszbook}.
\begin{equation}
  \hom(G\cup F,H)=\hom(G,H)\cdot\hom(F,H)
\end{equation}
The second is the following lemma proved by Lov\'asz.
\begin{lemma}[Proposition 5.43 in \cite{lovaszbook}]
  Let $S\subseteq\graphsu$ be a finite set of unlabeled graphs that is closed in the sense that, for all~$H\in S$, the set~$S$ contains all homomorphic images~$H'\in\graphsu$ of~$H$.
  Then the $(S\times S)$-matrix~$A$ with $A_{F,H}=\hom(F,H)$ is invertible.
\end{lemma}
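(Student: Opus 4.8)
The plan is to carry out a triangular factorization of $\hom$ in the spirit of the identity $\hom=\vsurj\cdot\ind$ used above, but based on the decomposition of a homomorphism into a quotient map followed by an injective homomorphism. This writes $\hom$ as a product $Q\cdot M$ of infinite matrices, where $Q$ is lower triangular with $1$s on the diagonal and $M$ is upper triangular with the numbers $\aut(H)$ on the diagonal, both with respect to the ordering of $\graphsu$ by total size used above (with ties broken arbitrarily). The closure hypothesis on $S$ then lets me restrict this identity to the finite principal submatrix indexed by $S$, exhibiting $A$ as a product of two invertible matrices.

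For the factorization, I would use the following standard fact. For $F\in\graphsu$ and a partition $\theta$ of $V(F)$, let $F/\theta$ be the quotient graph: its vertices are the blocks of $\theta$, a block carries a loop iff $F$ has an edge inside it, and two distinct blocks are adjacent iff $F$ has an edge between them. Every homomorphism $\phi\colon F\to H$ factors uniquely as the canonical quotient map $F\to F/{\ker\phi}$ followed by an injective homomorphism $F/{\ker\phi}\to H$, where $\ker\phi$ is the partition of $V(F)$ into the fibers of $\phi$; conversely, composing the quotient map $F\to F/\theta$ with any injective homomorphism $F/\theta\to H$ yields a homomorphism with kernel exactly $\theta$. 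Writing $\operatorname{inj}(\cdot,\cdot)$ for the number of injective homomorphisms and $q(F,F')$ for the number of partitions $\theta$ of $V(F)$ with $F/\theta\cong F'$, summing over $\theta$ gives $\hom(F,H)=\sum_{F'\in\graphsu}q(F,F')\cdot\operatorname{inj}(F',H)$ for all $F,H$, with only finitely many nonzero terms; equivalently $\hom=Q\cdot M$.

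Next I would read off the triangularity. Since a quotient map is surjective on vertices and on edges, $q(F,F')\ne0$ forces $\abs{V(F')}\le\abs{V(F)}$ and $\abs{E(F')}\le\abs{E(F)}$, with equality in both precisely when $\theta$ is the discrete partition, i.e.\ when $F'\cong F$; so $Q$ is lower triangular with diagonal entries $1$. Likewise $\operatorname{inj}(F',H)\ne0$ forces $F'$ to be isomorphic to a subgraph of $H$, whence again $\abs{V(F')}\le\abs{V(H)}$ and $\abs{E(F')}\le\abs{E(H)}$, with equality in both precisely when $F'\cong H$; so $M$ is upper triangular with diagonal entries $\operatorname{inj}(H,H)=\aut(H)$. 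In either case, equal total size already forces the two graphs to be isomorphic, so the triangularity does not depend on how ties in the ordering are resolved.

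Finally I would restrict to $S$. If $F\in S$, then every $F'$ with $q(F,F')\ne0$ is a homomorphic image of $F$ and hence lies in $S$ by the closure hypothesis, so the identity $\hom=Q\cdot M$ restricts to $A=(Q|_S)\cdot(M|_S)$ as an identity of finite $S\times S$ matrices. Both factors are finite triangular matrices with nonzero diagonal, hence invertible, and therefore $\det A=\det(Q|_S)\cdot\det(M|_S)=\prod_{H\in S}\aut(H)\ne0$. I expect the one genuinely delicate point to be this descent step: the equation $\hom=Q\cdot M$ holds with the summation index $F'$ ranging over all of $\graphsu$, and it is exactly the closure of $S$ under homomorphic images that keeps $F'$ inside $S$ when we pass to the principal submatrix. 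A minor point to keep in mind is that $Q|_S$ and $M|_S$ are triangular with respect to opposite orientations, so $A$ need not itself be triangular — but its determinant is still the product of the two diagonals, which is all that is needed.
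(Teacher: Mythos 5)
The paper does not actually prove this lemma; it is invoked as a black box by citing Proposition~5.43 of Lov\'asz's book, so there is no in-paper argument to compare against. Your proposal is nonetheless a correct and complete proof, and it is essentially the standard one (and, as far as I recall, the one Lov\'asz gives): factor $\hom=Q\cdot M$ through the quotient/injective decomposition of a homomorphism, read off that $Q$ is unitriangular and $M$ is triangular with the $\aut$-values on the diagonal for the total-size ordering, and observe that the closure hypothesis is precisely what lets the infinite identity descend to the principal $S\times S$ submatrix, because the quotients $F/\theta$ of $F\in S$ are exactly its homomorphic images and hence lie in $S$. Your closing remark --- that $Q|_S$ and $M|_S$ are triangular in opposite orientations so $A$ itself need not be triangular, yet $\det A$ is still the product of the two diagonals --- is the right thing to say and correctly identifies what is actually needed.

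One small caveat worth recording: the paper's $\graphsu$ formally allows parallel edges, while your quotient $F/\theta$ is by construction a simple graph (with loops). Consequently $q(F,F)=0$ whenever $F$ has a multi-edge, and $Q$ then fails to be unitriangular; indeed for the vertex-map notion of homomorphism the lemma itself would fail for such $S$, since a multigraph and its simplification would give identical rows of $A$. In the simple-graph-with-loops setting in which the lemma is used in this note --- and in which ``homomorphic image'' and ``quotient'' coincide, so the closure hypothesis is exactly closure under quotients --- your argument is airtight. If one really wanted to cover multigraphs one would either restrict $S$ to simple graphs or redefine the quotient and the factorization so as to track edge multiplicities, but this does not affect the applications in the paper.
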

The following lemma is completely analogous to its dual version in~\cite[Lemma~3.6]{CDM17}.
\begin{lemma}\label{lem: dual monotonicity}
  Let $\alpha:\mathcal G\to\R$ be a function of finite support and let $f:\mathcal G\to\R$ be the graph parameter with
  \begin{equation}
    f(G) = \sum_{H\in\mathcal G} \alpha(H)\cdot\hom(G,H)\,.
  \end{equation}
  When given oracle access to~$f$, we can compute $\hom(.,H)$ in polynomial time for all~$H\in\supp(\alpha)$.
\end{lemma}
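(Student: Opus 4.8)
The plan is to reduce, for each fixed $H_0\in\supp(\alpha)$, the computation of $\hom(\cdot,H_0)$ to solving one fixed finite system of linear equations whose right-hand side comes from a constant number of oracle calls to~$f$. Write $S=\supp(\alpha)$, which is finite by assumption. First I would enlarge $S$ to the set $\bar S\subseteq\mathcal G$ of all graphs that occur as homomorphic images of graphs in~$S$. Since a homomorphic image of a graph has at most as many vertices and edges as the graph itself, and since a homomorphic image of a homomorphic image is again a homomorphic image, this set $\bar S$ is finite, computable, contains~$S$, and is closed in exactly the sense required by Lov\'asz's Proposition~5.43. Applying that proposition to~$\bar S$, the $(\bar S\times\bar S)$-matrix $M$ with $M_{F,H}=\hom(F,H)$ is invertible; its inverse $M^{-1}$ is a fixed rational matrix that I precompute once, independently of the input~$G$.

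Next I would use the disjoint-union identity to turn an oracle call on a single large graph into a ``tagged'' homomorphism count. For any input graph~$G$ and any $F\in\bar S$,
\begin{equation*}
  f(G\cup F)=\sum_{H\in\mathcal G}\alpha(H)\cdot\hom(G\cup F,H)=\sum_{H\in\bar S}\hom(F,H)\cdot\bigl(\alpha(H)\cdot\hom(G,H)\bigr)\,,
\end{equation*}
where the first step uses that $\alpha$ is supported on~$S\subseteq\bar S$, and the second uses $\hom(G\cup F,H)=\hom(G,H)\cdot\hom(F,H)$ together with the fact that the added terms for $H\in\bar S\setminus S$ vanish. Collecting these $\abs{\bar S}$ identities shows that the vector $\bigl(f(G\cup F)\bigr)_{F\in\bar S}$ equals $M$ applied to the vector $\bigl(\alpha(H)\cdot\hom(G,H)\bigr)_{H\in\bar S}$. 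Multiplying the oracle answers by the precomputed matrix $M^{-1}$ therefore recovers $\alpha(H)\cdot\hom(G,H)$ for every $H\in\bar S$; since $\alpha(H_0)\ne 0$ for $H_0\in S$, dividing out yields $\hom(G,H_0)$.

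For the running time, note that $\bar S$, the matrix~$M$, and $M^{-1}$ depend only on~$\alpha$ and are constants of the reduction; on input~$G$ the procedure makes $\abs{\bar S}$ oracle calls, each on a graph $G\cup F$ of size $\abs{V(G)}+\abs{E(G)}+O(1)$, followed by one multiplication by the fixed rational matrix~$M^{-1}$, hence runs in polynomial time. The only genuinely delicate point I expect is the choice of~$\bar S$: one must observe that closing $S$ under \emph{homomorphic images} (rather than under subgraphs or induced subgraphs) is precisely the hypothesis of Proposition~5.43, and that this closure remains finite exactly because homomorphic images never increase the size of a graph. Everything after that—assembling the linear system from the disjoint-union identity and inverting a constant-size integer matrix over~$\Q$—is routine.
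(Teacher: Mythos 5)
Your proof is correct and takes essentially the same approach as the paper's: close $\supp(\alpha)$ under homomorphic images, invoke Lov\'asz's Proposition 5.43 for invertibility of the $\hom$-matrix on that closure, use the disjoint-union identity to express the oracle answers $f(G\cup F)$ as a matrix--vector product, invert, and divide by $\alpha(H_0)\ne 0$. The only (minor and welcome) addition is that you spell out explicitly why the closure is finite and genuinely closed, which the paper leaves implicit.
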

\begin{proof}
  Let $T=\supp(\alpha)$ be the support of $\alpha$, that is, the set of all graphs~$H\in\graphsu$ with $\alpha(H)\ne 0$.
  Let $S\subseteq\mathcal G$ be the set of all homomorphic images of graphs in~$T$.
  For each~$F\in S$, we have:
  \begin{equation}
    f(G\cup F) = \sum_{H\in\mathcal G} \alpha(H)\cdot\hom(G,H)\cdot\hom(F,H)\,.
  \end{equation}
  We define a vector~$\beta\in\R^S$ with $\beta(H)=\alpha(H)\cdot\hom(G,H)$.
  Let $A$ be the $(S\times S)$-matrix with $A_{F,H}=\hom(F,H)$.
  By the previous lemma, this matrix is invertible.
  Finally, let $b\in\R^S$ be the vector with $b(F)=f(G\cup F)$.
  Then $b$ can be written as the matrix-vector product:
  $
    b=A\cdot\beta
  $.
  Thus we have $\beta=A^{-1}\cdot b$.
  The vector~$b$ can be computed in polynomial time by querying the oracle for the values~$f(G\cup F)$.
  The matrix~$A$ can be computed in constant time since it only depends on the fixed function~$f$.
  Thus we can compute the entire vector~$\beta$.
  In particular, for each~$H\in\supp(\alpha)$, we can determine~$\hom(G,H)$ via the identity $\hom(G,H)=\beta(H)/\alpha(H)$ since $\alpha(H)\ne 0$.
\end{proof}
Applying this lemma to $f(*)=\vsurj(*,H)$ and $f(*)=\vesurj(*,H)$ yields the hardness of Theorems~\ref{fockeetal vesurj} and~\ref{fockeetal vsurj}.
The reason is that both functions~$f$ can written as a linear combination of~$\hom(G,H)$ via \eqref{eq: hom vsurj inverse expanded} and~\eqref{eq: hom comp sub inverse}, and in both cases a counting function $\hom(*,F)$ that is hard by Theorem~\ref{dyergreenhill} appears in the support of~$\alpha$.

For Theorem~\ref{fockeetal vsurj}, note that $\vsurj(G,H)=\sum_{F} \alpha(F)\cdot\hom(G,F)$ satisfies $\alpha(H)=1$ since $S=V(H)$ is the only term in~\eqref{eq: hom vsurj expanded} where $H[S]$ is isomorphic to~$H$.
Thus if $H\not\in\mathcal F$, then $\hom(*,H)$ is $\cP$-hard by Theorem~\ref{dyergreenhill}, and so $\vsurj(*,H)$ is $\cP$-hard by Lemma~\ref{lem: dual monotonicity}.

For Theorem~\ref{fockeetal vesurj}, recall that $\dsub^{-1}(H,H)=1$ holds.
Thus if $H\not\in\mathcal F$, then $\hom(*,H)$ is $\cP$-hard by Theorem~\ref{dyergreenhill} and it reduces to $\vesurj(*,H)$ by Lemma~\ref{lem: dual monotonicity}, so the latter is hard as well.
Now suppose $H\in\mathcal F\setminus\mathcal C$.
Then there is a non-loop edge~$e$ such that $(H-e)\not\in\mathcal F$.
Clearly $\hom(*,H-e)$ is $\cP$-hard by Theorem~\ref{dyergreenhill}.
It remains to show that $\dsub^{-1}(H-e,H)\ne 0$ so that Lemma~\ref{lem: dual monotonicity} reduces $\hom(*,H-e)$ to $\vesurj(*,H)$.
Using the fact that $\dsub$ and $\dsub^{-1}$ are upper triangular matrices and that $\dsub\cdot\dsub^{-1}$ is the infinite identity matrix, we have:
\begin{align*}
  0 &= (\dsub\cdot\dsub^{-1})(H-e,H)
    =
    \underbrace{\dsub(H-e,H-e)}_{=1} \cdot \dsub^{-1}(H-e,H) + \dsub(H-e,H) \cdot \underbrace{\dsub^{-1}(H,H)}_{=1}
    \,.
\end{align*}
This implies $\dsub^{-1}(H-e,H)=-\dsub(H-e,H)\ne 0$ as required.

\paragraph{Acknowledgments.}
I thank Jacob Focke, Leslie Ann Goldberg, and Standa Živný for comments on an earlier version of this note, and for subsequent discussions at the Dagstuhl Seminar 17341 on ``Computational Counting'' in August~2017.
I thank Radu Curticapean and Marc Roth for many discussions and comments.

\bibliographystyle{plainurl}
\bibliography{references}

\end{document}